\let\proof\relax   
\newtheorem{lemma}{Lemma}
\newtheorem{theorem}{Theorem}
\newtheorem{remark}{Remark}
\newtheorem{example}{Example}
\newcommand*{\transpose}{%
  {\mathpalette\@transpose{}}%
}
\begin{document}

\newcommand{\SB}[3]{
\sum_{#2 \in #1}\biggl|\overline{X}_{#2}\biggr| #3
\biggl|\bigcap_{#2 \notin #1}\overline{X}_{#2}\biggr|
}

\newcommand{\Mod}[1]{\ (\textup{mod}\ #1)}

\newcommand{\overbar}[1]{\mkern 0mu\overline{\mkern-0mu#1\mkern-8.5mu}\mkern 6mu}

\makeatletter
\newcommand*\nss[3]{%
  \begingroup
  \setbox0\hbox{$\m@th\scriptstyle\cramped{#2}$}%
  \setbox2\hbox{$\m@th\scriptstyle#3$}%
  \dimen@=\fontdimen8\textfont3
  \multiply\dimen@ by 4             
  \advance \dimen@ by \ht0
  \advance \dimen@ by -\fontdimen17\textfont2
  \@tempdima=\fontdimen5\textfont2  
  \multiply\@tempdima by 4
  \divide  \@tempdima by 5          
  \ifdim\dimen@<\@tempdima
    \ht0=0pt                        
    \@tempdima=\fontdimen5\textfont2
    \divide\@tempdima by 4          
    \advance \dimen@ by -\@tempdima 
    \ifdim\dimen@>0pt
      \@tempdima=\dp2
      \advance\@tempdima by \dimen@
      \dp2=\@tempdima
    \fi
  \fi
  #1_{\box0}^{\box2}%
  \endgroup
  }
\makeatother

\makeatletter
\renewenvironment{proof}[1][\proofname]{\par
  \pushQED{\qed}%
  \normalfont \topsep6\p@\@plus6\p@\relax
  \trivlist
  \item[\hskip\labelsep
        \itshape
    #1\@addpunct{:}]\ignorespaces
}{%
  \popQED\endtrivlist\@endpefalse
}
\makeatother

\makeatletter
\newsavebox\myboxA
\newsavebox\myboxB
\newlength\mylenA

\newcommand*\xoverline[2][0.75]{%
    \sbox{\myboxA}{$\m@th#2$}%
    \setbox\myboxB\null
    \ht\myboxB=\ht\myboxA%
    \dp\myboxB=\dp\myboxA%
    \wd\myboxB=#1\wd\myboxA
    \sbox\myboxB{$\m@th\overline{\copy\myboxB}$}
    \setlength\mylenA{\the\wd\myboxA}
    \addtolength\mylenA{-\the\wd\myboxB}%
    \ifdim\wd\myboxB<\wd\myboxA%
       \rlap{\hskip 0.5\mylenA\usebox\myboxB}{\usebox\myboxA}%
    \else
        \hskip -0.5\mylenA\rlap{\usebox\myboxA}{\hskip 0.5\mylenA\usebox\myboxB}%
    \fi}
\makeatother

\xpatchcmd{\proof}{\hskip\labelsep}{\hskip3.75\labelsep}{}{}

\pagestyle{plain}

\title{\fontsize{21}{28}\selectfont Private Linear Transformation: The Joint Privacy Case}

\author{Nahid Esmati, Anoosheh Heidarzadeh, and Alex Sprintson\thanks{The authors are with the Department of Electrical and Computer Engineering, Texas A\&M University, College Station, TX 77843 USA (E-mail: \{nahid, anoosheh, spalex\}@tamu.edu).}}


%


\maketitle 

\thispagestyle{plain}

\begin{abstract}

We introduce 
the problem of Private Linear Transformation (PLT). This problem includes a single (or multiple) remote server(s) storing (identical copies of) $K$ messages and a user who wants to compute $L$ linear combinations of a $D$-subset of these messages by downloading the minimum amount of information from the server(s) while protecting the privacy of the entire set of $D$ messages. This problem generalizes the Private Information Retrieval and Private Linear Computation problems. In this work, we focus on the single-server case. For the setting in which the coefficient matrix of the required $L$ linear combinations generates a Maximum Distance Separable (MDS) code, we characterize the capacity---defined as the supremum of all achievable download rates, for all parameters $K, D, L$. In addition, we present lower and/or upper bounds on the capacity for the settings with non-MDS coefficient matrices and the settings with a prior side information.  
\end{abstract}



\section{introduction}
This work 
introduces the problem of 
\emph{Private Linear Transformation (PLT)}. This problem includes a single (or multiple non-colluding or with limited collusion capability) remote server(s) storing (identical copies of) a dataset consisting of $K$ data items; and a user who is interested in computing $L$ linear combinations of a $D$-subset of data items. The goal of the user is to perform the computation privately so that the identities of the data items required for the computation are protected (to some degree) from the server(s), while minimizing the total amount of information being downloaded from the server(s).
The PLT problem generalizes the problems of Private Information Retrieval (PIR) (see, e.g.,~\cite{CGKS1995,SJ2017,BU17,BU2018,SJ2018No2,TSC2018,CWJ2018}) and Private Linear Computation (PLC) (see, e.g.,~\cite{SJ2018,MM2018,TM2019No2}), which have recently received a significant attention from the information and coding theory community.
In particular, PLT reduces to PIR or PLC when $L=D$ or $L=1$, respectively. In this work, we focus on the single-server setting of PLT. 

This problem setup appears in several practical scenarios such as linear transformation for dimensionality reduction in Machine Learning (ML), see, e.g.,~\cite{CG2015} and references therein. 
Consider a dataset with $N$ data samples, each with $K$ attributes. 
Consider a user that wishes to implement an ML  algorithm on a subset of $D$ selected attributes, while protecting the privacy of the selected attributes.  
When $D$ is large, the $D$-dimensional feature space is typically mapped onto a new subspace of lower dimension, say, $L$, and the ML algorithm operates on the new $L$-dimensional subspace instead. 
A commonly-used technique for dimensionality reduction is \emph{linear transformation}, where an $L\times D$ matrix is multiplied by the $D\times N$ data submatrix (the submatrix of the original $K\times N$ data matrix restricted to the $D$ selected attributes). 
Thinking of the rows of the $K\times N$ data matrix as the $K$ messages, the labels of the $D$ selected attributes as the identities of the $D$ messages in the support set of the required linear combinations, and the $L\times D$ matrix used for transformation as the coefficient matrix of the required linear combinations, this scenario matches the setup of the PLT problem.

A natural approach for PLT is to privately retrieve the items required for the computation using a PIR scheme, and then compute the required linear combinations locally. 
As was shown in~\cite{KGHERS2020,HKS2019Journal,HKRS2019,KKHS32019,HKGRS2018,LG2018,HKS2018,HKS2019}, leveraging a prior side information about the dataset, in the single-server setting, the user can retrieve a single or multiple data items privately with a much lower download cost than the trivial scheme of downloading the entire dataset. 
(The advantages of side information in multi-server PIR were also studied in~\cite{T2017,WBU2018,WBU2018No2,CWJ2017,SSM2018,KKHS22019,KKHS12019}.) 
However, when there is no side information, a PIR-based PLT approach is extremely expensive since all items in the dataset must be downloaded in order to achieve privacy~\cite{CGKS1995}. 

Another approach for PLT is to privately compute the required linear combinations separately via applying a PLC scheme multiple times. 
In~\cite{HS2019PC,HS2020}, it was shown that single-server PLC can be performed more efficiently than single-server PIR in terms of the download cost, regardless of whether the user has any side information or not. This suggests that a PLC-based PLT scheme can outperform a PIR-based PLT scheme; however, a PLC-based approach may still lead to an unnecessary overhead due to the excessive redundancy in the information being downloaded. This implies the need for novel PLT schemes with optimal download rate.

\subsection{Main Contributions}

In this paper, we study the problem of single-server PLT under a strong notion of privacy, which we call \emph{joint privacy}, 
where the user wants to hide the identities of all items required for the computation jointly, and leak no information about the correlation between them. 
We refer to this problem as \emph{PLT with Joint Privacy}, or \emph{JPLT} for short. 

We focus on the setting in which the coefficient matrix of the required linear combinations generates a Maximum Distance Separable (MDS) code. 
The MDS coefficient matrices are motivated by the application of \emph{random linear transformation} for dimensionality reduction (see, e.g.,~\cite{BM2001}), where a random $L\times D$ matrix is used for transformation. 
Note that 
an $L\times D$ matrix whose entries are randomly chosen from a sufficiently large alphabet is MDS with high probability. 
For this setting, we characterize the capacity of JPLT, where the capacity is defined as the supremum of download rates over all JPLT schemes. 
We prove the converse by using a mix of linear-algebraic and information-theoretic arguments, relying on a necessary condition for any JPLT scheme. 
We propose an achievability scheme, termed \emph{Specialized MDS Code protocol}, which leverages the idea of extending an MDS code. 
In addition, we briefly discuss the settings with non-MDS coefficient matrices and the settings with a prior side information; and present lower and/or upper bounds on the capacity of these settings.

\section{Problem Setup}\label{sec:SN}
Throughout, we denote random variables and their realizations by bold-face symbols and regular symbols, respectively. 

Let $\mathbb{F}_p$ be a finite field of order $p$, and let $\mathbb{F}_{q}$ be an extension field of $\mathbb{F}_p$. 
Let $K,D,L$ be positive integers such that ${L\leq D\leq K}$, and let $\mathcal{K}$ denote the set of integers $\{1,...,K\}$. Let $\mathscr{W}$ be the set of all $D$-subsets $\mathrm{W}$ of $\mathcal{K}$, and 
let $\mathscr{V}$ be the set of all $L\times D$ matrices $\mathrm{V}$ (with entries from $\mathbb{F}_p$) that are \emph{Maximum Distance Separable (MDS)}, i.e., every $L\times L$ submatrix of $\mathrm{V}$ is invertible. 

Consider a server that stores $K$ messages ${X_1,\dots,X_K}$, where $X_i\in \mathbb{F}_q$ for $i\in \mathcal{K}$. 
Let ${\mathrm{X}\triangleq [X_1,\dots,X_K]^{\mathsf{T}}}$.
For every ${\mathrm{S}\subset \mathcal{K}}$, denote by $\mathrm{X}_{\mathrm{S}}$ the vector $\mathrm{X}$ restricted to its components indexed by $\mathrm{S}$, i.e., $\mathrm{X}_{\mathrm{S}} = [X_{i_1},\dots,X_{i_s}]^{\mathsf{T}}$, where ${\mathrm{S} = \{i_1,\dots,i_s\}}$. 
Assume that $\mathbf{X}_1,\dots,\mathbf{X}_K$ are independently and uniformly distributed over $\mathbb{F}_{q}$. 
Thus, ${H(\mathbf{X}_{\mathrm{S}})= |\mathrm{S}| \theta}$ for every ${\mathrm{S}\subset \mathcal{K}}$, where $|\mathrm{S}|$ denotes the size of $\mathrm{S}$, and $\theta\triangleq \log_2 q$. 
Note that $H(\mathbf{X})=K\theta$. 

Consider a user who wishes to compute $L$ linear combinations $\mathrm{v}^{\mathsf{T}}_1 \mathrm{X}_{\mathrm{W}},\dots,\mathrm{v}^{\mathsf{T}}_L \mathrm{X}_{\mathrm{W}}$, collectively denoted by the vector $\mathrm{Z}^{[\mathrm{W},\mathrm{V}]}\triangleq \mathrm{V}\mathrm{X}_{\mathrm{W}}$, where ${\mathrm{W}\in \mathscr{W}}$ and ${\mathrm{V}= [\mathrm{v}_{1},\dots,\mathrm{v}_{L}]^{\mathsf{T}}\in \mathscr{V}}$. 
Note that $H(\mathbf{Z}^{[\mathrm{W},\mathrm{V}]})=L\theta$. 
We refer to $\mathrm{Z}^{[\mathrm{W},\mathrm{V}]}$ as the \emph{demand}, $\mathrm{W}$ as the \emph{support index set of the demand}, $\mathrm{V}$ as the \emph{coefficient matrix of the demand}, $D$ as the \emph{support size of the demand}, and $L$ as the \emph{dimension of the demand}.  

In this work, we assume that (i) $\mathbf{W}$, $\mathbf{V}$, and $\mathbf{X}$ are independent; (ii) $\mathbf{W}$ is uniformly distributed over all ${\mathrm{W}\in \mathscr{W}}$; 
(iii) $\mathbf{V}$ is uniformly distributed over all ${\mathrm{V}\in\mathscr{V}}$; 
and (iv) the parameters $D$ and $L$, and the joint distribution of $\mathbf{W}$ and $\mathbf{V}$ are initially known by the server, whereas the server does not initially know the realizations $\mathrm{W}$ and $\mathrm{V}$. 

Given $\mathrm{W}$ and $\mathrm{V}$, the user generates a query $\mathrm{Q}^{[\mathrm{W},\mathrm{V}]}$, simply denoted by $\mathrm{Q}$, and sends it to the server. 
The query $\mathrm{Q}$ is a function of $\mathrm{W}$, $\mathrm{V}$, and potentially a random key $\mathrm{R}$ (independent of $\mathrm{W}$, $\mathrm{V}$, and $\mathrm{X}$) that is generated by the user and is initially unknown to the server. 
That is, $H(\mathbf{Q}|\mathbf{W},\mathbf{V},\mathbf{R})=0$, where $\mathbf{Q}^{[\mathbf{W},\mathbf{V}]}$ is denoted by $\mathbf{Q}$. 

Given the query $\mathrm{Q}$, every $D$-subset of message indices must be equally likely to be the demand's support index set, i.e., for every $\mathrm{W}^{*} \in \mathscr{W}$, it must hold that
\begin{equation*}
\Pr (\mathbf{W}=\mathrm{W}^{*}|\mathbf{Q}=\mathrm{Q})=\Pr(\mathbf{W}=\mathrm{W}^{*}). 
\end{equation*} This condition, which we refer to as the \emph{joint privacy condition}, was previously considered for PIR and PLC (see, e.g.,~\cite{BU2018,HKGRS2018,HS2020}). 
Note that, for this type of privacy, it is not required that the user's query protects the privacy of the demand's coefficient matrix from the server. This is inspired by 
several real-world scenarios. For example, protecting the privacy of the selected attributes in the application of random linear transformation prevents the server (to some extent) from learning the inner working of the user's ML algorithm.

Upon receiving the query $\mathrm{Q}$, the server generates an answer $\mathrm{A}^{[\mathrm{W},\mathrm{V}]}$, simply denoted by $\mathrm{A}$, and sends it back to the user. 
The answer $\mathrm{A}$ is a deterministic function of $\mathrm{Q}$ and $\mathrm{X}$.
That is, $H(\mathbf{A}|\mathbf{Q},\mathbf{X})=0$, where $\mathbf{A}^{[\mathbf{W},\mathbf{V}]}$ is denoted by $\mathbf{A}$. 
The answer $\mathrm{A}$, the query $\mathrm{Q}$, and the realizations $\mathrm{W}, \mathrm{V}$ must collectively enable the user to retrieve the demand $\mathrm{Z}^{[\mathrm{W},\mathrm{V}]}$, i.e., 
$H(\mathbf{Z}| \mathbf{A},\mathbf{Q}, \mathbf{W},\mathbf{V})=0,$ where $\mathbf{Z}^{[\mathbf{W},\mathbf{V}]}$ is denoted by $\mathbf{Z}$. 
We refer to this condition as the \emph{recoverability condition}. 

The problem is to design a protocol for generating a query $\mathrm{Q}^{[\mathrm{W},\mathrm{V}]}$ and the corresponding answer $\mathrm{A}^{[\mathrm{W},\mathrm{V}]}$ 
such that both the joint privacy and recoverability conditions are satisfied.
We refer to this problem as single-server \emph{Private Linear Transformation (PLT) with Joint Privacy}, or \emph{JPLT} for short. 

Following the convention in the PIR and PLC literature, we measure the efficiency of a JPLT protocol by its \emph{rate}---defined as the ratio of the entropy of the demand (i.e., $H(\mathbf{Z})=L\theta$) to the entropy of the answer (i.e., $H(\mathbf{A})$). 
We define the \emph{capacity} of the JPLT setting as the supremum of rates over all JPLT protocols.
Our goal in this work is to characterize the capacity of the JPLT setting.

In~\cite{EHS2021Individual}, we have considered the problem of single-server \emph{PLT with Individual Privacy}, or \emph{IPLT} for short, 
where
every message index must be equally likely to belong to the demand's support index set. This condition was previously considered for PIR and PLC (see, e.g.,~\cite{HKRS2019,KKHS32019,HS2020}). Note that individual privacy is weaker than joint privacy, because not every $D$-subset of message indices needs to be equally likely to be the demand's support index set.   

\section{A Necessary Condition for JPLT Protocols}
The following lemma states a necessary (yet not always sufficient) condition for any JPLT protocol. This result follows immediately from the joint privacy and recoverability conditions, and its proof is omitted for brevity. 

\begin{lemma}\label{lem:NCJPLT}
Given any JPLT protocol, for any $\mathrm{W}^{*}\in\mathscr{W}$, there must exist $\mathrm{V}^{*}\in\mathscr{V}$, such that $H(\mathbf{Z}^{[\mathrm{W}^{*},\mathrm{V}^{*}]}| \mathbf{A}, \mathbf{Q})= 0$. 		
\end{lemma}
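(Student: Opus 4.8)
The plan is to read the statement per realization of the query and derive the existence of a recoverable MDS transform for each support set directly from the two defining conditions. Fix an arbitrary $\mathrm{W}^{*}\in\mathscr{W}$ and any realizable query, i.e.\ a realization $\mathrm{Q}$ with $\Pr(\mathbf{Q}=\mathrm{Q})>0$, together with its induced answer $\mathrm{A}$. First I would invoke the joint privacy condition: since $\mathbf{W}$ is uniform over $\mathscr{W}$, we have $\Pr(\mathbf{W}=\mathrm{W}^{*}\mid\mathbf{Q}=\mathrm{Q})=\Pr(\mathbf{W}=\mathrm{W}^{*})=1/\binom{K}{D}>0$, so the event $\{\mathbf{W}=\mathrm{W}^{*}\}$ is consistent with the observed query.

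Next I would translate this positivity into the existence of a consistent coefficient matrix. Because $H(\mathbf{Q}\mid\mathbf{W},\mathbf{V},\mathbf{R})=0$, the query is a deterministic function of the demand and the key; hence a strictly positive posterior $\Pr(\mathbf{W}=\mathrm{W}^{*}\mid\mathbf{Q}=\mathrm{Q})$ forces the existence of at least one $\mathrm{V}^{*}\in\mathscr{V}$ (and some key realization) for which the pair $(\mathrm{W}^{*},\mathrm{V}^{*})$ generates exactly this query $\mathrm{Q}$. This $\mathrm{V}^{*}$ is the matrix claimed by the lemma.

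Then I would apply recoverability to this consistent pair. Specializing $H(\mathbf{Z}\mid\mathbf{A},\mathbf{Q},\mathbf{W},\mathbf{V})=0$ to $(\mathbf{W},\mathbf{V})=(\mathrm{W}^{*},\mathrm{V}^{*})$ shows that $\mathbf{Z}^{[\mathrm{W}^{*},\mathrm{V}^{*}]}=\mathrm{V}^{*}\mathbf{X}_{\mathrm{W}^{*}}$ is determined once $(\mathbf{A},\mathbf{Q},\mathbf{W},\mathbf{V})$ is revealed. The point that needs care is to strip away the conditioning on $(\mathbf{W},\mathbf{V})$ and keep only $(\mathbf{A},\mathbf{Q})$. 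Here I would use the standing assumption that $\mathbf{X}$ is independent of $(\mathbf{W},\mathbf{V},\mathbf{R})$ together with the fact that $\mathbf{A}$ is a deterministic function of $(\mathbf{Q},\mathbf{X})$: with the query fixed to $\mathrm{Q}$ the answer is a function of $\mathbf{X}$ alone, so the conditional law of $\mathbf{X}$ given $(\mathbf{A},\mathbf{Q})=(\mathrm{A},\mathrm{Q})$ coincides with its law after additionally conditioning on $(\mathbf{W},\mathbf{V})=(\mathrm{W}^{*},\mathrm{V}^{*})$. Consequently $\mathrm{V}^{*}\mathbf{X}_{\mathrm{W}^{*}}$ is already a function of $(\mathbf{A},\mathbf{Q})$, giving $H(\mathbf{Z}^{[\mathrm{W}^{*},\mathrm{V}^{*}]}\mid\mathbf{A},\mathbf{Q})=0$.

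The main obstacle, and the reason the argument is a touch less automatic than it first appears, is exactly this transfer of determinism from the demand-conditioned distribution to the $(\mathbf{A},\mathbf{Q})$-only distribution, together with the fact that the witnessing matrix $\mathrm{V}^{*}$ may depend on the particular query realization (the answer generally pins down only one transform of $\mathbf{X}_{\mathrm{W}^{*}}$, not the whole space of admissible MDS transforms). Both issues are resolved by the independence of $\mathbf{X}$ from $(\mathbf{W},\mathbf{V},\mathbf{R})$, which guarantees that conditioning on a demand that is merely \emph{consistent} with the observed query does not alter the posterior over the messages, so the recoverability guarantee carries over verbatim to the weaker conditioning asserted by the lemma.
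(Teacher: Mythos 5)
Your overall route is the intended one: the paper treats this lemma as immediate from the joint privacy and recoverability conditions, and your two-step argument is exactly that reading --- (a) joint privacy plus uniformity of $\mathbf{W}$ gives $\Pr(\mathbf{W}=\mathrm{W}^{*}\mid\mathbf{Q}=\mathrm{Q})>0$, hence some pair $(\mathrm{V}^{*},\mathrm{R})$ consistent with the observed query, and (b) the independence of $\mathbf{X}$ from $(\mathbf{W},\mathbf{V},\mathbf{R})$ lets you transfer the recoverability guarantee from conditioning on $(\mathbf{A},\mathbf{Q},\mathbf{W},\mathbf{V})$ to conditioning on $(\mathbf{A},\mathbf{Q})$ alone. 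Steps (a) and (b) are carried out correctly, and you rightly identify (b) as the step needing care.

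There is, however, a genuine problem in your closing paragraph: independence resolves only the transfer-of-determinism issue, \emph{not} the dependence of the witness $\mathrm{V}^{*}$ on the query realization. What your argument actually proves is: for every $\mathrm{W}^{*}$ and every realization $\mathrm{Q}$ with $\Pr(\mathbf{Q}=\mathrm{Q})>0$, there exists $\mathrm{V}^{*}=\mathrm{V}^{*}(\mathrm{Q})$ such that $H(\mathbf{Z}^{[\mathrm{W}^{*},\mathrm{V}^{*}(\mathrm{Q})]}\mid\mathbf{A},\mathbf{Q}=\mathrm{Q})=0$. The lemma as written asserts a single $\mathrm{V}^{*}$ for which the \emph{unconditional} quantity $H(\mathbf{Z}^{[\mathrm{W}^{*},\mathrm{V}^{*}]}\mid\mathbf{A},\mathbf{Q})$ vanishes, which would require that same $\mathrm{V}^{*}$ to be recoverable under almost every query realization; no independence argument yields this, and in general it is false. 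For instance, in the paper's own Specialized MDS Code protocol with $K=3$, $D=2$, $L=1$, a user demanding $X_{1}+X_{3}$ sends a query whose unique recoverable combination supported on $\{1,2\}$ is (a scalar multiple of) $\lambda X_{1}-X_{2}$, where $\lambda$ is drawn from the user's private randomness, so no fixed $\mathrm{V}^{*}$ works across realizations. The per-realization statement you proved is the correct form of the lemma, and it is also all that the converse needs: the chain of inequalities in Lemma~\ref{lem:JPLT-Conv} can be run conditioned on each $\mathbf{Q}=\mathrm{Q}$ (with $\mathrm{V}_{i}=\mathrm{V}_{i}(\mathrm{Q})$) and then averaged, since the entropy-counting bound holds for any fixed choice of MDS matrices. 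So the fix is not a new idea but a correction of the last step: either state and use the lemma per query realization, or make explicit that $\mathrm{V}^{*}$ is permitted to depend on $\mathrm{Q}$; asserting that independence ``carries the guarantee over verbatim'' to a single $\mathrm{V}^{*}$ is not justified.
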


When considering \emph{linear} JPLT schemes, i.e., the schemes in which the answer consists of only \emph{linear} combinations of the messages, the necessary condition provided by Lemma~\ref{lem:NCJPLT} can be interpreted in the language of coding theory as follows. 
The coefficient matrix of the linear combinations corresponding to the answer must generate a (linear) code of length $K$ that, when punctured at any $K-D$ coordinates, contains $L$ codewords that are MDS, i.e., they generate a $[D,L]$ MDS code. (Puncturing a (linear) code at a coordinate is performed by deleting the column pertaining to that coordinate from the generator matrix of the code.)
A code satisfying this condition is, however, not guaranteed to yield a JPLT scheme. 
A sufficient (but not necessary) condition is that 
the codes resulting from puncturing at any $K-D$ coordinates contain the \emph{same number of groups} of $L$ MDS codewords. 
Thus, designing a linear JPLT scheme with \emph{maximum rate} reduces to constructing such a linear code with \emph{minimum dimension}. 
This sufficient condition is, however, more combinatorial in nature, and the necessary condition provided by Lemma~\ref{lem:NCJPLT} proves more useful when deriving an information-theoretic converse bound.

\section{Main Results}
In this section, we summarize our main results for JPLT. 

\begin{theorem}\label{thm:JPLT}
For the JPLT setting with $K$ messages, demand's support size $D$, and demand's dimension $L$, the capacity is given by $L/(K-D+L)$. 	
\end{theorem}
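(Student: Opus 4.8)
The plan is to establish the capacity $L/(K-D+L)$ by proving a matching converse (any protocol must download at least $(K-D+L)\theta$ bits) and an achievability result (a protocol that downloads exactly this much). Since the rate equals $H(\mathbf{Z})/H(\mathbf{A})=L\theta/H(\mathbf{A})$, the converse amounts to the bound $H(\mathbf{A})\ge (K-D+L)\theta$, and achievability amounts to exhibiting a JPLT protocol attaining it.

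For the converse, I would first reduce the answer-entropy bound to a statement about the residual uncertainty in the messages. Because the answer is a deterministic function of $(\mathbf{Q},\mathbf{X})$ while the query is independent of $\mathbf{X}$, we have
\begin{equation*}
H(\mathbf{A})\ge H(\mathbf{A}\mid \mathbf{Q})=I(\mathbf{A};\mathbf{X}\mid\mathbf{Q})=K\theta-H(\mathbf{X}\mid \mathbf{A},\mathbf{Q}),
\end{equation*}
so it suffices to show $H(\mathbf{X}\mid\mathbf{A},\mathbf{Q})\le (D-L)\theta$. The leverage is Lemma~\ref{lem:NCJPLT}: for every $\mathrm{W}^{*}\in\mathscr{W}$ there is an MDS $\mathrm{V}^{*}$ with $H(\mathbf{Z}^{[\mathrm{W}^{*},\mathrm{V}^{*}]}\mid\mathbf{A},\mathbf{Q})=0$, i.e.\ the $L$ combinations $\mathrm{V}^{*}\mathbf{X}_{\mathrm{W}^{*}}$ are determined by $(\mathbf{A},\mathbf{Q})$. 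Since every $L\times L$ submatrix of $\mathrm{V}^{*}$ is invertible, for any $L$-subset $T\subseteq\mathrm{W}^{*}$ the pair $(\mathbf{A},\mathbf{Q})$ together with the $D-L$ messages $\mathbf{X}_{\mathrm{W}^{*}\setminus T}$ determines $\mathbf{X}_{T}$, that is $H(\mathbf{X}_{T}\mid \mathbf{A},\mathbf{Q},\mathbf{X}_{\mathrm{W}^{*}\setminus T})=0$. Applying this with $\mathrm{W}^{*}=\{1,\dots,D-L\}\cup T$, as $T$ ranges over all $L$-subsets of $\{D-L+1,\dots,K\}$, peels off every remaining message and yields $H(\mathbf{X}_{D-L+1},\dots,\mathbf{X}_{K}\mid \mathbf{A},\mathbf{Q},\mathbf{X}_{1},\dots,\mathbf{X}_{D-L})=0$, hence $H(\mathbf{X}\mid\mathbf{A},\mathbf{Q})\le (D-L)\theta$ and the rate is at most $L/(K-D+L)$.

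For achievability I would use the \emph{Specialized MDS Code} idea: given $(\mathrm{W},\mathrm{V})$, the user samples a query describing a length-$K$, dimension-$(K-D+L)$ MDS code $\mathcal{C}$ whose shortening to the coordinates $\mathrm{W}$ (the codewords of $\mathcal{C}$ vanishing off $\mathrm{W}$, restricted to $\mathrm{W}$) equals the row space of $\mathrm{V}$; for $q$ large enough such a $\mathcal{C}$ exists by \emph{extending} the $[D,L]$ MDS code generated by $\mathrm{V}$ to a $[K,K-D+L]$ MDS code. The answer is the $K-D+L$ symbols $\mathcal{C}\mathbf{X}$, so $H(\mathbf{A})=(K-D+L)\theta$ and the rate is $L/(K-D+L)$. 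Recoverability holds because the shortening condition places each $[\mathrm{v}_{i}\mid 0]$ in $\mathcal{C}$, so $\mathrm{V}\mathbf{X}_{\mathrm{W}}$ is directly readable from $\mathcal{C}\mathbf{X}$. For privacy, the crucial fact is that shortening an MDS code to \emph{any} $D$-subset yields a $[D,L]$ MDS code; thus every $\mathrm{W}^{*}\in\mathscr{W}$ is consistent with a realized $\mathcal{C}$, and the coordinate-symmetry of MDS codes should make the number of generating choices producing a given $\mathcal{C}$ balanced across $\mathrm{W}^{*}$, giving $\Pr(\mathbf{W}=\mathrm{W}^{*}\mid\mathbf{Q}=\mathcal{C})=\Pr(\mathbf{W}=\mathrm{W}^{*})$.

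I expect the main obstacle to lie in the privacy half of achievability: one must sample the extended MDS code so that the induced query distribution is provably independent of $\mathrm{W}$, which requires a careful count establishing that for each realized code every $D$-subset $\mathrm{W}^{*}$ arises from exactly the same number of generating choices (handling that $\mathrm{V}$ is a matrix while $\mathcal{C}$ depends only on its row space, and that $q$ must be large enough for the MDS extension to exist for every $\mathrm{W}$ and every $\mathrm{V}\in\mathscr{V}$). By contrast, the converse is comparatively direct once Lemma~\ref{lem:NCJPLT} and MDS invertibility are in hand.
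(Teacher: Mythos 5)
Your converse is correct but proceeds by a genuinely different decomposition than the paper's. You reduce the problem to showing $H(\mathbf{X}\mid\mathbf{A},\mathbf{Q})\le (D-L)\theta$ and prove it by peeling: anchor the indices $\{1,\dots,D-L\}$, sweep an $L$-subset $T$ over $\{D-L+1,\dots,K\}$, and use Lemma~\ref{lem:NCJPLT} together with invertibility of the $L\times L$ block of $\mathrm{V}^{*}$ on $T$ to get $H(\mathbf{X}_T\mid\mathbf{A},\mathbf{Q},\mathbf{X}_1,\dots,\mathbf{X}_{D-L})=0$; subadditivity over a cover of $\{D-L+1,\dots,K\}$ by such $T$'s then yields the claim, and $H(\mathbf{A})\ge H(\mathbf{A}\mid\mathbf{Q})=K\theta-H(\mathbf{X}\mid\mathbf{A},\mathbf{Q})$ finishes. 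The paper instead works on the demand side: it forms $T=K-D+1$ sliding windows $\mathrm{W}_i=\{i,\dots,i+D-1\}$, invokes Lemma~\ref{lem:NCJPLT} to obtain recoverable demands $\mathbf{Z}_1,\dots,\mathbf{Z}_T$, shows $H(\mathbf{A})\ge H(\mathbf{Z}_1,\dots,\mathbf{Z}_T)$, and lower-bounds the latter by $L\theta+(T-1)\theta$ via the chain rule, using only that each new window introduces a fresh message index appearing in some component of $\mathbf{Z}_i$. Both arguments are sound and of comparable length, but note what each buys: your route exploits the full MDS structure of $\mathrm{V}^{*}$ (invertibility of prescribed $L\times L$ submatrices), whereas the paper's needs only that each $\mathrm{V}_i$ is full rank with no all-zero column---this weaker requirement is precisely what lets the paper extend the converse to non-MDS coefficient matrices in Remark~\ref{rem:JPLT3}, which your argument does not directly give.

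On achievability, your construction is the paper's Specialized MDS Code protocol: the query is a generator matrix $\mathrm{G}$ of a $[K,K-D+L]$ MDS code whose shortening to $\mathrm{W}$ equals the row space of $\mathrm{V}$. Two corrections/completions. First, the existence argument goes through the dual, not through ``extending the $[D,L]$ code to a $[K,K-D+L]$ code'' (which is not a code extension, since the dimension changes): the paper extends the parity-check matrix $\Lambda$ of $\mathrm{V}$---a $[D,D-L]$ MDS code---to a $(D-L)\times K$ MDS matrix $\mathrm{H}$, and takes $\mathrm{G}$ to have parity-check matrix $\mathrm{H}$; also, the relevant field-size requirement is on $\mathbb{F}_p$ (the coefficient field), not on $\mathbb{F}_q$. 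Second, the privacy counting you flag as the main obstacle is exactly the point the paper closes with a standard property of MDS codes: an $[n,k]$ MDS code contains, for every $d$-subset $\mathcal{I}$ with $n-k+1\le d\le n$, a \emph{unique} $(d-n+k)$-dimensional subspace supported on $\mathcal{I}$, and any basis of it is MDS. With $n=K$, $k=K-D+L$, $d=D$, the row space of $\mathrm{G}$ contains exactly one $L$-dimensional subspace on every $D$-subset of coordinates, so every $\mathrm{W}^{*}\in\mathscr{W}$ is consistent with the query in precisely one way at the subspace level; this collapses the count over ``generating choices'' you were worried about. The residual distributional claim---that the randomized choice of the extension makes the posterior on $\mathbf{W}$ exactly uniform---is asserted rather than belabored in the paper as well, so your hedge correctly identifies the step needing the most care, but it is not a flaw specific to your proposal.
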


The proof of converse is based on information theoretic arguments relying mostly on the necessary condition for JPLT protocols---provided by Lemma~\ref{lem:NCJPLT}. 
The converse bound naturally serves as an upper bound on the rate of any JPLT protocol. 
We prove the achievability by designing a linear JPLT protocol, termed the \emph{Specialized MDS Code protocol}, that achieves the converse bound. 
This protocol generalizes those in~\cite{HKGRS2018} and~\cite{HS2019PC} for single-server PIR and PLC (without SI) with joint privacy, and is based on the idea of extending the MDS code generated by the coefficient matrix of the demand. 
In particular, when the coefficient matrix of the demand generates a Generalized Reed-Solomon (GRS) code, 
we give an explicit construction of a GRS code that contains a specific collection of codewords---specified by the demand's support index set and coefficient matrix. 

\begin{remark}\label{rem:PLT}
\emph{The result of Theorem~\ref{thm:JPLT} shows that, when there is only a single server and there is no side information available at the user, JPLT can be performed more efficiently than using either of the following two approaches: 
(i) retrieving the messages required for computation using a multi-message PIR scheme~\cite{HKGRS2018} and computing the required linear combinations locally, or 
(ii) computing each of the required linear combinations separately via applying a PLC scheme~\cite{HS2019PC}. 
More specifically, the optimal rate for the approach (i) or (ii) is $L/K$ or $1/(K-D+1)$, respectively, whereas an optimal JPLT scheme achieves the rate $L/(K-D+L)$.}
\end{remark}

\begin{remark}\label{rem:JPLT1}
\emph{In~\cite{HS2019PC}, it was shown that the rate ${1/(K-D+1)}$ is achievable for PLC (without SI) with joint privacy, but no converse result was presented. 
The result of Theorem~\ref{thm:JPLT} for ${L=1}$ proves the optimality of this rate.   
For $L=D$, the problem reduces to PIR (without SI) when joint privacy is required, and as was shown in~\cite{HKGRS2018}, an optimal solution for this case is to download the entire dataset.}	
\end{remark}

\begin{remark}\label{rem:JPLT2}
\emph{Theorem~\ref{thm:JPLT} can be extended to 
JPLT with Side Information (SI). 
Two types of SI were previously studied for PIR and PLC: \emph{Uncoded SI (USI)} (see, e.g.,~\cite{KGHERS2020}), and \emph{Coded SI (CSI)} (see, e.g.,~\cite{HKS2019Journal}). 
In the case of USI, the user initially knows a subset of $M$ messages, whereas in the case of CSI, 
the user initially knows $L$ MDS coded combinations of $M$ messages. In both cases, the identities of these $M$ messages are initially unknown by the server. 
Using similar techniques 
as in this work, we can show that 
the capacity of JPLT with USI is given by ${L/(K-D-M+L)}$ when the identities of the messages in the support set of the demand and those in the support set of the side information need to be protected jointly. 
Similarly, we can show that the capacity of JPLT with CSI (under the same privacy condition) does not change, provided that the coefficient matrix of the side information and that of the demand form an MDS matrix when concatenated horizontally. 
}
\end{remark}

\begin{remark}\label{rem:JPLT3}
\emph{The JPLT setting can also be extended by relaxing the MDS assumption on the coefficient matrix of the demand. 
In particular, using a JPLT scheme similar to the one for the case of MDS matrices, we can achieve the rate $L/(K-D+L)$ when the demand's coefficient matrix is randomly chosen from the set of all \emph{full-rank} (but not necessarily MDS) matrices. 
This rate, however, may not be optimal, and the converse is still open. 
Furthermore, when the demand's coefficient matrix is randomly chosen from the set of all full-rank matrices with \emph{nonzero columns}, using the same proof technique as in the case of MDS matrices, we can show that the rate of any JPLT scheme is upper bound by $L/(K-D+L)$. 
However, the achievability of this rate upper bound remains unknown in general.}
\end{remark}

\section{Proof of Converse}\label{sec:JPLT-Conv}



\begin{lemma}\label{lem:JPLT-Conv}
The rate of any JPLT protocol for $K$ messages, demand's support size $D$, and demand's dimension $L$ is upper bounded by $L/(K-D+L)$.
\end{lemma}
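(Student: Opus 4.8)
The plan is to recast the claimed rate bound as a lower bound on the answer entropy. Since the rate of a JPLT protocol is $H(\mathbf{Z})/H(\mathbf{A}) = L\theta/H(\mathbf{A})$, the inequality $\text{rate}\le L/(K-D+L)$ is equivalent to $H(\mathbf{A}) \ge (K-D+L)\theta$, and this is the quantity I would establish. First I would condition on a fixed query realization: for any $Q$ with $\Pr(\mathbf{Q}=Q)>0$, the answer $\mathbf{A}$ is a deterministic function of $\mathbf{X}$, and since $H(\mathbf{A}) \ge H(\mathbf{A}\mid \mathbf{Q})$ it suffices to show $H(\mathbf{A}\mid \mathbf{Q}=Q)\ge (K-D+L)\theta$ for every such $Q$. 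The enabler here is Lemma~\ref{lem:NCJPLT}: because the unconditional $H(\mathbf{Z}^{[\mathrm{W}^{*},\mathrm{V}^{*}]}\mid \mathbf{A},\mathbf{Q})$ equals zero and each conditional term is nonnegative, for every positive-probability $Q$ and every chosen support set there is an MDS matrix $\mathrm{V}^{*}$ such that $\mathbf{Z}^{[\mathrm{W}^{*},\mathrm{V}^{*}]}$ is a deterministic function of $(\mathbf{A},Q)$.

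Next I would select a family of support sets and bound the joint entropy of the recovered demands. Concretely, take the sliding windows $\mathrm{W}_j = \{j+1,\dots,j+D\}$ for $j=0,1,\dots,K-D$, and for each let $\mathrm{V}_j$ be the MDS matrix supplied by Lemma~\ref{lem:NCJPLT}. Zero-padding the $L$ rows of each $\mathrm{V}_j$ to length $K$ gives a collection of vectors in $\mathbb{F}_q^{K}$; because the messages are i.i.d.\ and uniform, the joint entropy of $\{\mathbf{Z}^{[\mathrm{W}_j,\mathrm{V}_j]}\}_{j}$ equals $\theta$ times the dimension of the span of these vectors. Since every such demand is a function of $(\mathbf{A},Q)$, this joint entropy lower bounds $H(\mathbf{A}\mid\mathbf{Q}=Q)$, so the task reduces to showing the span has dimension at least $K-D+L$.

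The main obstacle, and the crux of the argument, is this purely linear-algebraic dimension count, which must hold for \emph{every} admissible choice of the $\mathrm{V}_j$ (we do not get to pick them). I would argue by induction on $j$. The window $\mathrm{W}_0$ contributes dimension exactly $L$, since an $L\times D$ MDS matrix has rank $L$. For the inductive step, the union $\mathrm{W}_0\cup\cdots\cup\mathrm{W}_{j-1}=\{1,\dots,D+j-1\}$, so every vector accumulated so far vanishes in coordinate $D+j$; the new window $\mathrm{W}_j$ contains coordinate $D+j$, and because an MDS matrix has no zero column, at least one row of $\mathrm{V}_j$ has a nonzero entry there. That row therefore lies outside the previously accumulated span, forcing the dimension to increase by at least one at each step. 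After $K-D$ steps the total dimension is at least $L+(K-D)=K-D+L$, which gives $H(\mathbf{A}\mid\mathbf{Q}=Q)\ge (K-D+L)\theta$ and completes the converse.

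The two points needing care are, first, justifying that the MDS property rules out a zero column: since $D\ge L$, any such column can be completed to an $L\times L$ submatrix, which would then be singular, contradicting the MDS assumption. Second, one must confirm that conditioning on $\mathbf{Q}=Q$ does not perturb the entropy of the demands; this follows because $\mathbf{X}$ is independent of $(\mathbf{W},\mathbf{V},\mathbf{R})$ and hence of $\mathbf{Q}$, so $H(\{\mathbf{Z}^{[\mathrm{W}_j,\mathrm{V}_j]}\}_{j}\mid\mathbf{Q}=Q)=H(\{\mathbf{Z}^{[\mathrm{W}_j,\mathrm{V}_j]}\}_{j})=\theta\cdot\dim(\cdot)$ as claimed.
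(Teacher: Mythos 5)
Your proposal is correct and follows essentially the same route as the paper's proof: both invoke Lemma~\ref{lem:NCJPLT} on the same sliding-window family of support sets $\{1,\dots,D\}, \{2,\dots,D+1\},\dots,\{K-D+1,\dots,K\}$, and both extract one extra unit of entropy per window from the fact that an MDS matrix has no zero column, yielding $H(\mathbf{A})\geq (K-D+L)\theta$. Your conditioning on a fixed query realization and the rank-of-the-span bookkeeping are just a repackaging of the paper's chain-rule argument on $H(\mathbf{Z}_1,\dots,\mathbf{Z}_T)$, not a different method.
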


\begin{proof}
Consider an arbitrary JPLT protocol that generates the query-answer pair $(\mathrm{Q}^{[\mathrm{W},\mathrm{V}]},\mathrm{A}^{[\mathrm{W},\mathrm{V}]})$ for any given $(\mathrm{W},\mathrm{V})$. 
To show that the rate of this protocol is upper bounded by $L/(K-D+L)$, we need to show that ${H(\mathbf{A})\geq (K-D+L)\theta}$, where $\mathbf{A}$ denotes $\mathbf{A}^{[\mathbf{W},\mathbf{V}]}$, and $\theta$ is the entropy of a message. 
Let $T\triangleq K-D+1$. 
For every $1\leq i\leq T$, let $\mathrm{W}_i \triangleq {\{i,i+1,\dots,i+D-1\}}$. 
Note that ${\mathrm{W}_1,\dots,\mathrm{W}_T\in \mathscr{W}}$.
By Lemma~\ref{lem:NCJPLT}, for any ${1\leq i\leq T}$, there exists ${\mathrm{V}_i\in \mathscr{V}}$ such that ${H(\mathbf{Z}_i |\mathbf{A},\mathbf{Q}) = 0}$, where ${\mathbf{Z}_i\triangleq \mathbf{Z}^{[\mathrm{W}_i,\mathrm{V}_i]}}$. (Note that $\mathrm{V}_i$ is an MDS matrix.)
This readily implies that $H(\mathbf{Z}_1,\dots,\mathbf{Z}_{T}|\mathbf{A},\mathbf{Q})=0$ since $H(\mathbf{Z}_1,\dots,\mathbf{Z}_{T}|\mathbf{A},\mathbf{Q})\leq \sum_{i=1}^{T} H(\mathbf{Z}_i|\mathbf{A},\mathbf{Q}) = 0$. 
Thus, 
\begin{align}
    H(\mathbf{A})&\geq H(\mathbf{A}|\mathbf{Q}) +H(\mathbf{Z}_1,\dots,\mathbf{Z}_{T}|\mathbf
    {Q},\mathbf{A}) \label{eq:1}\\
    &= H(\mathbf{Z}_1,\dots,\mathbf{Z}_{T}|\mathbf{Q})
    +H(\mathbf{A}|\mathbf{Q},\mathbf{Z}_1,\dots,\mathbf{Z}_{T}) \label{eq:2}\\
    &\geq H(\mathbf{Z}_1,\dots,\mathbf{Z}_{T}), \label{eq:3}
\end{align} 
where~\eqref{eq:1} holds because $H(\mathbf{Z}_1,\dots,\mathbf{Z}_{T}|\mathbf{A},\mathbf{Q})=0$, as shown earlier; \eqref{eq:2} follows from the chain rule of conditional entropy;
and \eqref{eq:3} holds because 
(i) $\mathbf{Z}_i$'s are independent from $\mathbf{Q}$, noting that $\mathbf{Z}_i$'s only depend on $\mathbf{X}$, and $\mathbf{Q}$ is independent of $\mathbf{X}$, and 
(ii) $H(\mathbf{A}|\mathbf{Q},\mathbf{Z}_1,\dots,\mathbf{Z}_{T})\geq 0$. 

To lower bound $H(\mathbf{Z}_1,\dots,\mathbf{Z}_{T})$, we proceed as follows. 
By the chain rule of entropy, we have $H(\mathbf{Z}_1,\dots,\mathbf{Z}_{T})=H(\mathbf{Z}_1)+\sum_{1<i\leq T} H(\mathbf{Z}_{i}|\mathbf{Z}_{1},\dots,\mathbf{Z}_{i-1})$. 
Let $\mathbf{Z}_{i,1},\dots,\mathbf{Z}_{i,L}$ be the $L$ components of the vector $\mathbf{Z}_i$, i.e., $\mathbf{Z}_{i,l}\triangleq \mathrm{v}^{\mathsf{T}}_{i,l} \mathbf{X}_{\mathrm{W}_i}$, where $\mathrm{v}^{\mathsf{T}}_{i,l}$ is the $l$th row of $\mathrm{V}_i$. 
Note that $\mathbf{Z}_i$ consists of $L$ components $\mathbf{Z}_{i,1},\dots,\mathbf{Z}_{i,L}$, and 
these components are independent because their corresponding coefficient vectors $\mathrm{v}_{i,1},\dots,\mathrm{v}_{i,L}$ are linearly independent. 
Moreover, $\mathbf{Z}_{i,1},\dots,\mathbf{Z}_{i,L}$ are uniform over $\mathbb{F}_q$, i.e., $H(\mathbf{Z}_{i,l})=\theta$ for $l\in \{1,\dots,L\}$. 
Thus, $H(\mathbf{Z}_i)=H(\mathbf{Z}_{i,1},\dots,\mathbf{Z}_{i,L})=L\theta$, particularly, ${H(\mathbf{Z}_1)=L\theta}$. 
Obviously, $\mathbf{X}_{i-D+1}$ belongs to the support set of $\mathbf{Z}_{i,l}$ for some ${l\in \{1,\dots,L\}}$. Otherwise, $\mathrm{V}_i$ contains an all-zero column, which is a contradiction. Moreover, $\mathbf{X}_{i-D+1}$ does not belong to the support set of any of the components of $\mathbf{Z}_j$ for any $j<i$ (by construction). 
This implies that $\mathbf{Z}_i$ contains at least one component, namely, $\mathbf{Z}_{i,l}$, that cannot be written as a linear combination of the components in $\mathbf{Z}_1,\dots,\mathbf{Z}_{i-1}$. 
Thus, $\mathbf{Z}_{i,l}$ is independent of $\mathbf{Z}_1,\dots,\mathbf{Z}_{i-1}$. 
This further implies that $H(\mathbf{Z}_{i}|\mathbf{Z}_{1},\dots,\mathbf{Z}_{i-1})\geq H(\mathbf{Z}_{i,l})=\theta$, and consequently, $\sum_{1<i\leq T} H(\mathbf{Z}_{i}|\mathbf{Z}_{1},\dots,\mathbf{Z}_{i-1})\geq {(T-1)\theta}$. 
Thus,
\begin{equation}\label{eq:4}
H(\mathbf{Z}_1,\dots,\mathbf{Z}_T)\geq L\theta + (T-1)\theta = (K-D+L)\theta.    
\end{equation} Combining~\eqref{eq:3} and~\eqref{eq:4}, we have $H(\mathbf{A})\geq (K-D+L)\theta$.
\end{proof}

\section{Achievability Scheme}\label{sec:JPLT-Ach}

In this section, we propose a JPLT protocol, termed \emph{Specialized MDS Code Protocol}, that achieves the rate $L/(K-D+L)$.  
This protocol consists of three steps.\vspace{0.125cm} 

\textbf{Step 1:} Given the demand support index set $\mathrm{W}$ and the demand coefficient matrix $\mathrm{V}=[\mathrm{v}_1,\dots,\mathrm{v}_l]^{\mathsf{T}}$, the user constructs a query $\mathrm{Q}^{[\mathrm{W},\mathrm{V}]}$ in the form of a matrix $\mathrm{G}$, such that the user's query, i.e., the matrix $\mathrm{G}$, and the server's corresponding answer $\mathrm{A}^{[\mathrm{W},\mathrm{V}]}$, i.e., the vector $\mathrm{G}\mathrm{X}$, 
satisfy the recoverability and privacy conditions.

To satisfy privacy, it is required that, for any index set $\mathrm{W}^{*}\in\mathscr{W}$, the code generated by $\mathrm{G}$ contains $L$ codewords whose support index sets are some subsets of $\mathrm{W}^{*}$, and the coordinates of these codewords (indexed by $\mathrm{W}^{*}$) form an MDS matrix $\mathrm{V}^{*}\in \mathscr{V}$. 
By the properties of MDS codes~\cite{R2006}, it is easy to verify that the generator matrix of any $[K,K-D+L]$ MDS code satisfies this requirement. 
Any such matrix, however, is not guaranteed to satisfy the recoverability condition.
For satisfying recoverability, it is required that $\mathrm{G}$, as a generator matrix, generates a code that contains $L$ codewords with the support $\mathrm{W}$, and the coordinates of these codewords (indexed by $\mathrm{W}$) must conform to the coefficient matrix $\mathrm{V}$. 
To construct a matrix $\mathrm{G}$ that satisfies these requirements, the user proceeds as follows.

First, the user constructs the parity-check matrix $\Lambda$ of the $[D,L]$ MDS code generated by $\mathrm{V}$. 
Since $\mathrm{V}$ is an MDS matrix, then $\Lambda$ generates a $[D,D-L]$ MDS code (i.e., the dual of the MDS code generated by $\mathrm{V}$). 

The user then constructs a $(D-L)\times K$ matrix $\mathrm{H}$ that satisfies the following two conditions: 
\begin{itemize}
    \item[(i)] The submatrix of $\mathrm{H}$ restricted to columns indexed by $\mathrm{W}$ (and all rows) is $\Lambda$, and
    \item[(ii)] The matrix $\mathrm{H}$ is MDS. 
\end{itemize}
Since $\Lambda$ is an MDS matrix, constructing $\mathrm{H}$ reduces to extending the $[D,D-L]$ MDS code generated by $\Lambda$ to a $[K,D-L]$ MDS code. (Extending a code is performed by adding new columns to the generator matrix of the code.)
Next, the user constructs a $(K-D+L)\times K$ matrix $\mathrm{G}$ that generates the MDS code defined by the parity-check matrix $\mathrm{H}$. 
(Since $\mathrm{H}$ generates a $[K,D-L]$ MDS code, $\mathrm{H}$ is the parity-check matrix of a $[K,K-D+L]$ MDS code.)
The user sends the matrix $\mathrm{G}$ as the query $\mathrm{Q}^{[\mathrm{W},\mathrm{V}]}$ to the server. 

In the following, we describe how to explicitly construct the matrix $\mathrm{G}$ when the coefficient matrix $\mathrm{V}$ generates a GRS code, i.e., the entry $(i,j)$ of $\mathrm{V}$ is given by $\mathrm{V}_{i,j} \triangleq \nu_{j} \omega_{j}^{i-1}$, where $\nu_{1},\dots,\nu_{D}$ are $D$ elements from $\mathbb{F}_p\setminus \{0\}$, and $\omega_{1},\dots,\omega_{D}$ are $D$ distinct elements from $\mathbb{F}_p$. 
The parameters $\nu_{1},\dots,\nu_{D}$ and $\omega_{1},\dots,\omega_{D}$ are the multipliers and the evaluation points of the GRS code generated by $\mathrm{V}$, respectively. 
Since the dual of a GRS code is a GRS code~\cite{R2006}, the parity-check matrix $\Lambda$ of the GRS code generated by $\mathrm{V}$ is a ${(D-L)\times D}$ matrix whose entry $(i,j)$ is given by ${\Lambda_{i,j} \triangleq \lambda_{j} \omega_{j}^{i-1}}$, where ${\lambda_{j} \triangleq \nu_{j}^{-1}\prod_{k\in \{1,\dots,D\}\setminus \{j\}} (\omega_{j}-\omega_{k})^{-1}}$.
Note that $\lambda_1,\dots,\lambda_D$ are nonzero.  
Extending the $(D-L)\times D$ matrix $\Lambda$ to a $(D-L)\times K$ matrix $\mathrm{H}$---satisfying the conditions (i) and (ii) specified earlier---is performed as follows. 

Let $\mathrm{W}=\{i_1,\dots,i_D\}$ and $\mathcal{K}\setminus \mathrm{W} = \{i_{D+1},\dots,i_K\}$, and 
let $\pi$ be a permutation on $\mathcal{K}$ such that $\pi(j) = i_j$. 
Let $\lambda_{D+1},\dots,\lambda_{K}$ be $K-D$ elements  chosen randomly (with replacement) from $\mathbb{F}_p\setminus \{0\}$, and 
let $\omega_{D+1},\dots,\omega_K$ be $K-D$ elements chosen randomly (without replacement) from $\mathbb{F}_p\setminus \{\omega_1,\dots,\omega_D\}$.
For every ${j\in \{1,\dots,D\}}$, let the $\pi(j)$th column of $\mathrm{H}$ be the $j$th column of $\Lambda$, and for every $j\in \mathcal{K}\setminus \{1,\dots,D\}$, let the $\pi(j)$th column of $\mathrm{H}$ be $[\lambda_j,\lambda_{j}\omega_j,\dots,\lambda_j\omega_j^{D-L-1}]^{\mathsf{T}}$. 
Since $\mathrm{H}$ is the parity-check matrix of a $[K,K-D+L]$ GRS code, the generator matrix of this code, $\mathrm{G}$, can be simply constructed by taking the $\pi(j)$th column of $\mathrm{G}$ to be $[\alpha_j,\alpha_j\omega_j,\dots,\alpha_j\omega_j^{K-D+L-1}]^{\mathsf{T}}$, where $\alpha_j\triangleq \lambda_{j}^{-1}\prod_{k\in \mathcal{K}\setminus \{j\}} (\omega_{j}-\omega_{k})^{-1}$.
The parameters $\{\alpha_{j}\}_{j\in \mathcal{K}}$ and $\{\omega_{j}\}_{j\in \mathcal{K}}$ are the multipliers and the evaluation points of the GRS code generated by $\mathrm{G}$, respectively.\vspace{0.125cm} 

\textbf{Step 2:} Given the query $\mathrm{Q}^{[\mathrm{W},\mathrm{V}]}$, i.e., the matrix $\mathrm{G}$, the server computes $\mathrm{y}\triangleq \mathrm{G}\mathrm{X}$, and sends the vector $\mathrm{y}$ back to the user as the answer $\mathrm{A}^{[\mathrm{W},\mathrm{V}]}$. In particular, when $\mathrm{V}$ generates a GRS code, the $i$th entry of the vector $\mathrm{y} =  [y_1,\dots,y_{K-D+L}]^{\mathsf{T}}$ is given by $y_i = \sum_{j\in{\mathcal{K}}}\alpha_{j}\omega_j^{i-1}X_{j}$.\vspace{0.125cm}  

\textbf{Step 3:} Upon receiving the answer $\mathrm{A}^{[\mathrm{W},\mathrm{V}]}$, i.e., the vector $\mathrm{y}$, 
the user constructs a matrix $[\tilde{\mathrm{G}},\tilde{\mathrm{y}}]$ by 
performing row operations on the augmented matrix $[\mathrm{G},\mathrm{y}]$, so as to zero out the submatrix formed by the first $L$ rows and the columns indexed by $\mathcal{K}\setminus \mathrm{W}$.
Since the submatrix of $[\tilde{\mathrm{G}},\tilde{\mathrm{y}}]$ formed by the first $L$ rows and the columns indexed by $\mathrm{W}$ (or $\mathcal{K}\setminus\mathrm{W}$) is the matrix $\mathrm{V}$ (or an all-zero matrix), 
the $l$th component of the demand vector $\mathrm{Z}^{[\mathrm{W},\mathrm{V}]}$, i.e., $\mathrm{v}^{\mathsf{T}}_l\mathrm{X}_{\mathrm{W}}$, can be recovered as the $l$th entry of the vector $\tilde{\mathrm{y}}$. 
When $\mathrm{V}$ generates a GRS code, $\mathrm{Z}^{[\mathrm{W},\mathrm{V}]}$ can be recovered from the vector $\mathrm{y}$ as follows. 
First, the user constructs $L$ polynomials $f_1(x),\dots,f_L(x)$, where \[f_{l}(x)\triangleq {x^{l-1}\prod_{j\in \mathcal{K}\setminus\{1,\dots,D\}}(x-\omega_j)}.\] 
Let $\mathrm{c}_{l}\triangleq [c_{l,1},\dots,c_{l,K-D+L}]^{\mathsf{T}}$, where $c_{l,i}$ is the coefficient of the monomial $x^{i-1}$ in the expansion of $f_{l}(x)$. 
The user then recovers 
$\mathrm{v}^{\mathsf{T}}_l\mathrm{X}_{\mathrm{W}}$ for $1\leq l\leq L$ by computing $\mathrm{c}_{l}^{\mathsf{T}}\mathrm{y}$. 

\begin{example}
\normalfont 
Consider  a  scenario  where  the  server has ${K=10}$ messages $\mathrm{X}_1,\dots,\mathrm{X}_{10}\in\mathbb{F}_{11}$, and the user wants to compute ${L=2}$ linear combinations of $D=5$ messages $X_2$, $X_4$, $X_5$, $X_7$, $X_8$, say, $Z_1 = X_2+3X_4+2X_5+X_7+6X_8$ and
$Z_2 = 3X_2+10X_4+7X_5+4X_7+8X_8$. Note that for this example, the demand's support index set $\mathrm{W}=\{2,4,5,7,8\}$, and the demand's coefficient matrix 
\begin{equation*}
\mathrm{\mathrm{V}} = 
\begin{bmatrix}
1 & 3 & 2 & 1 & 6\\
3 & 10 & 7 & 4 & 8\\
\end{bmatrix}.
\end{equation*}
It is easy to verify that $\mathrm{V}$ generates a $[5,2]$ GRS code with the multipliers $\{\nu_{1},\dots,\nu_{5}\}=\{1,3,2,1,6\}$ and the evaluation points $\{\omega_{1},\dots,\omega_{5}\}=\{3,7,9,4,5\}$. Thus, the user can obtain the parity-check matrix $\Lambda$ of this code as 
\begin{equation*}
\mathrm{\Lambda} = 
\begin{bmatrix}
3 & 10 & 8 & 8 & 7\\
9 & 4 & 6 & 10 & 2\\
5 & 6 & 10 & 7 & 10 \\
\end{bmatrix}.
\end{equation*}
Note that $\Lambda$ generates a $[5,3]$ MDS code with the multipliers $\{\lambda_1,\dots,\lambda_5\} = \{3,10,8,8,7\}$ and the evaluation points $\{\omega_1,\dots,\omega_5\} = \{3,7,9,4,5\}$. 
Next, the user extends the ${3\times 5}$ matrix $\Lambda$ to a ${3\times 10}$ matrix $\mathrm{H}$ that satisfies the conditions (i) and (ii) specified in the step 1 of the protocol. 
Suppose the user randomly chooses $6$ additional multipliers $\{\lambda_{6},\dots,\lambda_{10}\}=\{3,5,1,1,4\}$ (from $\mathbb{F}_{11}\setminus \{0\}$) and $6$ additional evaluation points $\{\omega_{6},\dots,\omega_{10}\}=\{6,1,10,2,8\}$ (from $\mathbb{F}_{11}\setminus \{\omega_1,\dots,\omega_{5}\}$). Followed by constructing a permutation $\pi$ as described in the step 1 of the protocol, say, $\{\pi(1),\dots,\pi(10)\}=\{2,4,5,7,8,1,3,6,9,10\}$, the user constructs the matrix $H$ as 
\begin{equation*}
\mathrm{H} = 
\begin{bmatrix}
3 & \mathbf{3} & 5 & \mathbf{10} & \mathbf{8} & 1 & \mathbf{8} & \mathbf{7} & 1 & 4\\
7 & \mathbf{9} & 5 & \mathbf{4} & \mathbf{6} & 10 & \mathbf{10} & \mathbf{2} & 2 & 10\\
9 & \mathbf{5} & 5 & \mathbf{6} & \mathbf{10} & 1 & \mathbf{7} & \mathbf{10} & 4 & 3
\end{bmatrix},
\end{equation*} where the columns indexed by $\pi(1), \pi(2), \pi(3), \pi(4), \pi(5)$ (i.e., $2,4,5,7,8$) correspond to the columns $1,2,3,4,5$ of $\Lambda$, respectively, and the columns indexed by $\pi(6),\pi(7),\pi(8),\pi(9),\pi(10)$ (i.e., $1,3,6,9,10$) correspond to the columns of the generator matrix of a $[5,3]$ GRS code with the multipliers $\{\lambda_6,\dots,\lambda_{10}\}$ and the evaluation points  $\{\omega_6,\dots,\omega_{10}\}$. 
That is, for every $i\in \{6,\dots,10\}$, the $\pi(i)$th column of $\mathrm{H}$ is given by $[\lambda_{i},\lambda_i\omega_i,\lambda_i\omega_i^2]^{\mathsf{T}}$. 
Since $\mathrm{H}$ generates a $[10,3]$ GRS code with the multipliers $\{\lambda_1,\dots,\lambda_{10}\}$ and the evaluation points $\{\omega_1,\dots,\omega_{10}\}$, $\mathrm{H}$ can be thought of as the parity-check matrix of a $[10,7]$ GRS code with the multipliers $\{\alpha_{1},\dots,\alpha_{10}\}=\{10,7,4,5,4,9,2,1,4,4\}$ and the evaluation points $\{\omega_{1},\dots,\omega_{10}\}=\{6,3,1,7,9,10,4,5,2,8\}$. (The process of computing $\alpha_i$'s is explained in the step 1 of the protocol.) The user then obtains the generator matrix $\mathrm{G}$ of this code, 
\begin{equation*}
\mathrm{G} = 
\begin{bmatrix}
9 & 10 & 2 & 7 & 4 & 1 & 5 & 4 & 4 & 4\\
10 & 8 & 2 & 5 & 3 & 10 & 9 & 9 & 8 & 10\\
5 & 2 & 2 & 2 & 5 & 1 & 3 & 1 & 5 & 3 \\
8 & 6 & 2 & 3 & 1 & 10 & 1 & 5 & 10 & 2\\
4 & 7 & 2 & 10 & 9 & 1 & 4 & 3 & 9 & 5\\
2 & 10 & 2 & 4 & 4 & 10 & 5 & 4 & 7 & 7\\
1 & 8 & 2 & 6 & 3 & 1 & 9 & 9 & 3 & 1\\
\end{bmatrix}.
\end{equation*} Then, the user sends the matrix $\mathrm{G}$ as the query to the server. The server then computes the vector ${\mathrm{y}= \mathrm{G}\mathrm{X}}$, and sends it back to the user. 
Next, the user constructs two polynomials $f_1(x)={(x-\omega_6)(x-\omega_7)(x-\omega_8)(x-\omega_9)(x-\omega_{10})}= {(x-6)(x-1)(x-10)(x-2)(x-8)}$ and $f_2(x)=x f_1(x)$ (for more details, see the step 3 of the protocol). It is easy to verify that the coefficient vectors of the polynomials $f_1(x)$ and $f_2(x)$ are given by $\mathrm{c}_1=[8,1,8,9,6,1,0]^{\mathsf{T}}$ and $\mathrm{c}_2=[0,8,1,8,9,6,1]^{\mathsf{T}}$, respectively.
The user then recovers their demand, i.e., $Z_1$ and $Z_2$, by computing 
\begin{align*}
& Z_1 = \mathrm{c}_{1}^{\mathsf{T}}\mathrm{y} = X_{2}+3X_{4}+2X_{5}+X_{7}+6X_{8},\\
& Z_2 = \mathrm{c}_{2}^{\mathsf{T}}\mathrm{y}={3X_{2}+10X_{4}+7X_{5}+4X_{7}+8X_{8}}.
\end{align*} For this example, the rate of the proposed protocol is $\frac{L}{(K-D+L)} = \frac{2}{7}$, whereas the rate of a PIR-based scheme or a PLC-based scheme is $\frac{L}{K} = \frac{2}{10}$ or $\frac{1}{K-D} = \frac{1}{5}$, respectively.    
\end{example}

\begin{lemma}\label{lem:JPLT-Ach}
The Specialized MDS Code protocol is a JPLT protocol, and achieves the rate $L/(K-D+L)$. 
\end{lemma}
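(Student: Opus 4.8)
The plan is to prove Lemma~\ref{lem:JPLT-Ach} by verifying two things separately: that the Specialized MDS Code protocol satisfies both the recoverability and joint privacy conditions (making it a valid JPLT protocol), and that its rate equals $L/(K-D+L)$. The rate computation is immediate: the answer $\mathrm{y}=\mathrm{G}\mathrm{X}$ consists of the $K-D+L$ entries of a vector over $\mathbb{F}_q$, so $H(\mathbf{A})\leq (K-D+L)\theta$, giving rate at least $L\theta / ((K-D+L)\theta) = L/(K-D+L)$; combined with Lemma~\ref{lem:JPLT-Conv} this is exactly the rate. So the substance is the correctness verification.

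First I would establish recoverability. The key structural claim is that $\mathrm{G}$, as constructed, generates a code containing $L$ codewords whose support is exactly $\mathrm{W}$ and whose restriction to the coordinates indexed by $\mathrm{W}$ equals $\mathrm{V}$. I would argue this via the parity-check matrix $\mathrm{H}$: by construction, the columns of $\mathrm{H}$ indexed by $\mathrm{W}$ are precisely $\Lambda$, the parity-check matrix of the $[D,L]$ code generated by $\mathrm{V}$. Hence any length-$K$ vector supported on $\mathrm{W}$ whose $\mathrm{W}$-restriction is a codeword of $\mathrm{V}$'s code lies in the kernel of $\mathrm{H}$, i.e., in the code generated by $\mathrm{G}$. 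This gives $L$ such codewords. Performing row operations on $[\mathrm{G},\mathrm{y}]$ to zero out the first $L$ rows on coordinates $\mathcal{K}\setminus\mathrm{W}$ then isolates these codewords in the first $L$ rows, so the corresponding entries of $\tilde{\mathrm{y}}$ equal $\mathrm{V}\mathrm{X}_{\mathrm{W}}=\mathrm{Z}^{[\mathrm{W},\mathrm{V}]}$. For the explicit GRS instantiation I would verify that $\mathrm{c}_l^{\mathsf{T}}\mathrm{y}$ recovers $\mathrm{v}_l^{\mathsf{T}}\mathrm{X}_{\mathrm{W}}$ by checking that $f_l(x)=x^{l-1}\prod_{j\notin\{1,\dots,D\}}(x-\omega_j)$ evaluates to zero at the evaluation points $\omega_j$ for $j\notin\mathrm{W}$ (killing those messages) and reproduces the GRS multipliers/evaluation points of $\mathrm{V}$ on $\mathrm{W}$.

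Next I would verify joint privacy. The query $\mathrm{Q}^{[\mathrm{W},\mathrm{V}]}=\mathrm{G}$ is (the generator matrix of) a $[K,K-D+L]$ MDS code. The crucial observation is that $\mathrm{G}$'s \emph{distribution} must be identical for every realization of $\mathrm{W}$, so that observing $\mathrm{G}$ reveals nothing about $\mathrm{W}$. I would argue that for any target support $\mathrm{W}^{*}\in\mathscr{W}$, the MDS property of $\mathrm{G}$ guarantees (via the defining property of MDS codes used in Step~1, citing~\cite{R2006}) the existence of $L$ codewords supported within $\mathrm{W}^{*}$ whose $\mathrm{W}^{*}$-restriction is an MDS matrix $\mathrm{V}^{*}\in\mathscr{V}$; this symmetry across all $\mathrm{W}^{*}$ is what makes every $\mathrm{W}^{*}$ equally consistent with the observed query. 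The randomization over the extension parameters $\lambda_{D+1},\dots,\lambda_K$ and $\omega_{D+1},\dots,\omega_K$ (and the random choice of $\mathrm{V}$) must be shown to induce a query distribution satisfying $\Pr(\mathbf{W}=\mathrm{W}^{*}\mid\mathbf{Q}=\mathrm{G})=\Pr(\mathbf{W}=\mathrm{W}^{*})$.

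The main obstacle I expect is the privacy argument rather than recoverability: recoverability follows cleanly from the kernel/parity-check structure, but establishing the exact equality in the joint privacy condition requires a careful counting or symmetry argument showing that each candidate support set $\mathrm{W}^{*}$ can be "explained" by the same number of $(\mathrm{W},\mathrm{V},\mathrm{R})$ triples producing a given $\mathrm{G}$. This is precisely the combinatorial "same number of groups of $L$ MDS codewords" sufficient condition flagged after Lemma~\ref{lem:NCJPLT}. I would handle it by appealing to the uniformity of $\mathbf{V}$ over $\mathscr{V}$ and the symmetry of the random GRS extension under relabeling coordinates, arguing that the map from $(\mathrm{W},\mathrm{V},\mathrm{R})$ to $\mathrm{G}$ is invariant enough that the posterior on $\mathbf{W}$ matches its uniform prior.
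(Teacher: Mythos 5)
Your rate and recoverability arguments are correct and essentially the paper's own: the paper computes $H(\mathbf{A})=(K-D+L)\theta$ exactly from the full-rankness of $\mathrm{G}$ (rather than bounding $H(\mathbf{A})\leq (K-D+L)\theta$ and invoking Lemma~\ref{lem:JPLT-Conv}, but your variant is valid and not circular), and its recoverability proof is precisely your kernel argument, phrased as showing $\mathrm{U}\mathrm{H}^{\mathsf{T}}=0$ for the matrix $\mathrm{U}$ obtained by embedding $\mathrm{V}$ into the coordinates $\mathrm{W}$ and padding with zeros.

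The gap is in the privacy step---the step you yourself flag as the main obstacle and leave as a plan. You invoke only the \emph{existence}, for each $\mathrm{W}^{*}$, of $L$ codewords of $\mathrm{G}$ supported within $\mathrm{W}^{*}$ whose restriction is an MDS matrix, and then defer the exact posterior equality to an unproven ``symmetry under relabeling'' argument. The fact the paper actually uses is stronger than existence: for a $[K,K-D+L]$ MDS code and \emph{any} $D$-subset of coordinates, the codewords supported within that subset form a \emph{unique} $L$-dimensional subspace, and \emph{every} basis of that subspace is an MDS matrix. Uniqueness is what makes the count in the ``same number of groups of $L$ MDS codewords'' condition identical across all $\mathrm{W}^{*}$ (each $\mathrm{W}^{*}$ admits exactly one candidate demand subspace, hence the same number of ordered MDS bases), which combined with the uniform priors on $\mathbf{W}$ and $\mathbf{V}$ yields $\Pr(\mathbf{W}=\mathrm{W}^{*}\mid \mathbf{Q}=\mathrm{G})=\Pr(\mathbf{W}=\mathrm{W}^{*})$; without it, different supports could a priori be consistent with different numbers of demands and the posterior need not be flat. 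Moreover, the symmetry route you propose cannot close this on its own: permutation invariance of the query distribution asserts that $(\pi(\mathbf{W}),\pi(\mathbf{Q}))$ and $(\mathbf{W},\mathbf{Q})$ are identically distributed, which equates the posterior of $\mathrm{W}^{*}$ at the query $\mathrm{G}$ with the posterior of $\pi(\mathrm{W}^{*})$ at the \emph{different} query $\pi(\mathrm{G})$, whereas joint privacy requires comparing all $\mathrm{W}^{*}$ at the \emph{same} fixed $\mathrm{G}$. So even after your symmetrization, you would still need exactly the uniqueness-based argument that the paper supplies via the cited MDS-code property.
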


\begin{proof}
Since the answer $\mathrm{y}=\mathrm{G}\mathrm{X}$ is a vector of length $K-D+L$, and the entries of this vector are linearly independent coded combinations of the messages $\mathbf{X}_1,\dots,\mathbf{X}_K$ (noting that the matrix $\mathrm{G}$ is full rank), the entropy of the answer is given by ${(K-D+L)\theta}$, where $\theta$ is the entropy of a message. Thus, the rate of this protocol is $L/(K-D+L)$. 

Since $\mathrm{G}$ generates a $[K,K-D+L]$ MDS code with minimum distance $D-L+1$, it is easy to verify that the joint privacy condition is satisfied. 
By the properties of MDS codes~\cite{R2006}, an $[n,k]$ MDS code (with minimum distance $n-k+1$) satisfies the following condition: 
for any ${n-k+1\leq d\leq n}$ and any $d$-subset $\mathcal{I}\subseteq \{1,\dots,n\}$, the code space contains a unique $(d-n+k)$-dimensional subspace on the coordinates indexed by $\mathcal{I}$, and any basis of this subspace forms an MDS matrix. 
This implies that the row space of the matrix $\mathrm{G}$ contains a unique $L$-dimensional subspace on every $D$-subset of coordinates, and each of these subspaces 
is equally likely to be the subspace spanned by the user's demand, from the server's perspective. 
Thus, given the query (i.e., the matrix $\mathrm{G}$), every $D$-subset of message indices is equally likely to be the support index set of the demand. 

The recoverability follows readily from the construction. 
Let ${\mathrm{U} \triangleq [\mathrm{u}_{1},\dots,\mathrm{u}_{L}]^{\mathsf{T}}}$, where $\mathrm{u}_{l}$ is a column-vector of length $K$ such that the vector $\mathrm{u}_{l}$ restricted to its components indexed by $\mathrm{W}$ is the vector $\mathrm{v}_l$, and the rest of the components of the vector $\mathrm{u}_{l}$ are all zero. 
Note that $\mathrm{V}$ is the submatrix of $\mathrm{U}$ formed by columns indexed by $\mathrm{W}$. 
We need to show that 
the rows of $\mathrm{U}$ are $L$ codewords of the code generated by $\mathrm{G}$. 
Since $\mathrm{H}$ is the parity-check matrix of the code generated by $\mathrm{G}$, this is equivalent to showing that $\mathrm{U}\mathrm{H}^{\mathsf{T}}$ is an all-zero matrix. 
Firstly, the submatrix of $\mathrm{U}\mathrm{H}^{\mathsf{T}}$ restricted to columns indexed by $\mathrm{W}$ is given by $\mathrm{V}\Lambda^{\mathsf{T}}$, and $\mathrm{V}\Lambda^{\mathsf{T}}$ is an all-zero matrix because $\Lambda$ is the parity-check matrix of the code generated by $\mathrm{V}$. 
Secondly, the submatrix of $\mathrm{U}\mathrm{H}^{\mathsf{T}}$ formed by the columns indexed by $\mathcal{K}\setminus \mathrm{W}$ is an all-zero matrix because the submatrix of $\mathrm{U}$ restricted to these columns is an all-zero matrix. Thus, $\mathrm{U}\mathrm{H}^{\mathsf{T}}$ is an all-zero matrix.
\end{proof}


\bibliographystyle{IEEEtran}
\bibliography{PIR_PC_Refs}

\end{document}